\def\ind{ {{\rm 1}\hskip-2.2pt{\rm l}}}
\begin{document}

\newcommand\keywords[1]{\textbf{Keywords}: #1} 
\date{} 

\allowdisplaybreaks[4]            
\renewcommand{\figurename}{Fig.}                                

\newcommand{\bA}{\bm A}
\newcommand{\ba}{\bm a}
\newcommand{\bB}{\bm B}
\newcommand{\bb}{\bm b}
\newcommand{\bC}{\bm C}
\newcommand{\bc}{\bm c}
\newcommand{\bD}{\bm D}
\newcommand{\bd}{\bm d}
\newcommand{\bE}{\bm E}
\newcommand{\be}{\bm e}
\newcommand{\bF}{\bm F}
\newcommand{\bmf}{\bm f}
\newcommand{\bG}{\bm G}
\newcommand{\bg}{\bm g}
\newcommand{\bH}{\bm H}
\newcommand{\bh}{\bm h}
\newcommand{\bI}{\bm I}
\newcommand{\bi}{\bm i}
\newcommand{\bJ}{\bm J}
\newcommand{\bj}{\bm j}
\newcommand{\bK}{\bm K}
\newcommand{\bk}{\bm k}
\newcommand{\bL}{\bm L}
\newcommand{\bl}{\bm l}
\newcommand{\bM}{\bm M}
\newcommand{\bmm}{\bm m}
\newcommand{\bN}{\bm N}
\newcommand{\bn}{\bm n}
\newcommand{\bO}{\bm O}
\newcommand{\bo}{\bm o}
\newcommand{\bP}{\bm P}
\newcommand{\bp}{\bm p}
\newcommand{\bQ}{\bm Q}
\newcommand{\bq}{\bm q}
\newcommand{\bR}{\bm R}
\newcommand{\br}{\bm r}
\newcommand{\bS}{\bm S}
\newcommand{\bs}{\bm s}
\newcommand{\bT}{\bm T}
\newcommand{\bt}{\bm t}
\newcommand{\bU}{\bm U}
\newcommand{\bu}{\bm u}
\newcommand{\bV}{\bm V}
\newcommand{\bv}{\bm v}
\newcommand{\bW}{\bm W}
\newcommand{\bw}{\bm w}
\newcommand{\bX}{\bm X}
\newcommand{\bx}{\bm x}
\newcommand{\bY}{\bm Y}
\newcommand{\by}{\bm y}
\newcommand{\bZ}{\bm Z}
\newcommand{\bz}{\bm z}

\newcommand{\balpha}{\bm \alpha}
\newcommand{\bbeta}{\bm \beta}
\newcommand{\bgamma}{\bm \gamma}
\newcommand{\bdelta}{\bm \delta}
\newcommand{\bepsilon}{\bm \epsilon}
\newcommand{\bvarepsilon}{\bm \varepsilon}
\newcommand{\bzeta}{\bm \zeta}
\newcommand{\bmeta}{\bm \eta}
\newcommand{\btheta}{\bm \theta}
\newcommand{\bvartheta}{\bm \vartheta}
\newcommand{\biota}{\bm \iota}
\newcommand{\bkappa}{\bm \kappa}
\newcommand{\blambda}{\bm \lambda}
\newcommand{\bmu}{\bm \mu}
\newcommand{\bnu}{\bm \nu}
\newcommand{\bxi}{\bm \xi}
\newcommand{\bmo}{\bm o}
\newcommand{\bpi}{\bm \pi}
\newcommand{\bvarpi}{\bm \varpi}
\newcommand{\brho}{\bm \rho}
\newcommand{\bvarrho}{\bm \varrho}
\newcommand{\bsigma}{\bm \sigma}
\newcommand{\bvarsigma}{\bm \varsigma}
\newcommand{\btau}{\bm \tau}
\newcommand{\bupsilon}{\bm \upsilon}
\newcommand{\bphi}{\bm \phi}
\newcommand{\bvarphi}{\bm \varphi}
\newcommand{\bchi}{\bm \chi}
\newcommand{\bpsi}{\bm \psi}
\newcommand{\bomega}{\bm \omega}
\newcommand{\bGamma}{\bm \Gamma}
\newcommand{\bDelta}{\bm \Delta}
\newcommand{\bTheta}{\bm \Theta}
\newcommand{\bLambda}{\bm \Lambda}
\newcommand{\bXi}{\bm \Xi}
\newcommand{\bPi}{\bm \Pi}
\newcommand{\bSigma}{\bm \Sigma}
\newcommand{\bUpsilon}{\bm \Upsilon}
\newcommand{\bPhi}{\bm \Phi}
\newcommand{\bPsi}{\bm \Psi}
\newcommand{\bOmega}{\bm \Omega}

\newtheorem{theorem}{\bf Theorem}
\newtheorem{lemm}{\bf Lemma}
\newtheorem{rem}{Remark}
\newtheorem{coll}{\bf Corollary}
\newtheorem{thm}{Theorem}
\newtheorem{lem}[thm]{Lemma}
\newtheorem{Proof}{proof}

\title{Testing for sufficient follow-up in censored survival data by using extremes}

\author[a,c]{Ping Xie}
\author[b]{Mikael Escobar-Bach}
\author[c]{Ingrid Van Keilegom \thanks{Corresponding author:ingrid.vankeilegom@kuleuven.be}}
\affil[a]{School of Mathematical Sciences, Dalian University of Technology, 116024 Dalian, Liaoning, China}
\affil[b]{Laboratoire Angevin de Recherche en Math\'{e}matiques, Universit\'{e} d'Angers, \\ 2 Boulevard de Lavoisier, 49035 Angers, France}
\affil[c]{Research Centre for Operations Research and Statistics, KU Leuven, \\ Naamsestraat 69, 3000 Leuven, Belgium}

\maketitle 

\begin{abstract}
In survival analysis, it often happens that some individuals, referred to as cured individuals, never experience the event of interest.
When analyzing time-to-event data with a cure fraction, it is crucial to check the assumption of `sufficient follow-up', which means that the right extreme of the censoring time distribution is larger than that of the survival time distribution for the non-cured individuals.
However, the available methods to test this assumption are limited in the literature. In this article, we study the problem of testing whether follow-up is sufficient for light-tailed distributions and develop a simple novel test. The proposed test statistic compares an estimator of the non-cure proportion under sufficient follow-up to one without the assumption of sufficient follow-up. A bootstrap procedure is employed to approximate the critical values of the test. We also carry out extensive simulations to evaluate the finite sample performance of the test and illustrate the practical use with applications to leukemia and breast cancer datasets.
\end{abstract}

\begin{keywords}
Bootstrap; Cure models; Extreme value theory; Hypothesis test; Kaplan-Meier estimator; Survival analysis.
\end{keywords}


\section{Introduction}\label{Intro}

In classical survival analysis, it is commonly assumed that all individuals in the study will eventually experience the event of interest if the follow-up period is long enough.
In reality, however, this assumption is often violated since some individuals will never experience the event of interest.
Here, those individuals are referred to as immune or cured individuals while the remaining individuals are often called susceptible.
The phenomenon of individuals being cured is quite common especially in cancer clinical studies, mainly thanks to the significant medical advancements that have been made.
For example, \cite{CA-SiegelEtal-2023} showed that the $5$-year relative survival rate for breast cancer is $91\%$ for patients diagnosed during 2012 through 2018.
Some patients with breast cancer will never experience a relapse or recurrence of breast cancer after receiving treatment and thus be cured.
We can also find other examples in various areas of applications, such as sociology (the time to become a parent), criminology (the time to commit crimes for a portion of the released population) and economics (the time to find a new job after a layoff), among others.

Cure models are frequently employed to analyze this type of data due to their ability to take the cure fraction into account.
The literature about cure models covers a wide range of research areas, including the estimation of the effects of covariates, the cure proportion and the survival function, and tests for the presence of cured individuals and for the validity of the assumed cure models, among others.
We refer the reader to \cite{Book_PengTaylor_2014}, \cite{ARSA_AmicoIngrid_2018}, and \cite{Book_Legrand_2019} for recent review papers on cure models, and to \cite{Book_MallerZhou_1996} and \cite{Book_PengYu_2021} for books on cure models.

When analyzing survival data with a cure fraction, it is essential to test the assumption of sufficient follow-up, which means that the right extreme of the censoring time distribution is larger than that of the survival time distribution for the susceptibles.
In practice, the length of the plateau in the Kaplan-Meier estimator (KME) \citep{JASA_KaplanMeier_1958} of the survival function and the number of censored observations contained in that plateau can serve as useful indicators to test the assumption of sufficient follow-up.
However, this judgement tends to be rather subjective and ambiguous.
To the best of our knowledge, there is still a scarcity of research on testing for sufficient follow-up in the statistical literature.
Based on the interval length from the largest uncensored observed survival time to the largest observed survival time, \cite{Biometrika_MallerZhou_1992} first proposed a testing procedure to this particular testing problem, and this test was studied in detail by \cite{JASA_MallerZhou_1994}.
However, their proposed test only estimates the significance level but does not control it, and the numerical study given in \cite{Book_MallerZhou_1996} showed that their test is too conservative.
Modifications for \cite{Biometrika_MallerZhou_1992}'s test were provided by \cite{Book_MallerZhou_1996} and \cite{SPL_Shen_2000}, respectively.
\cite{JSPI_KlebanovYakovlev_2007} discussed the aforementioned three testing procedures for the problem of testing for sufficient follow-up and found out that they fail to produce satisfactory results due to the highly unstable nature of the KME and related statistics in the presence of censoring.
Recently, \cite{CJS_MallerEtal_2023} filled in the gap in the asymptotic properties of the test statistic proposed by \cite{Book_MallerZhou_1996}, while the distributions of the tests proposed by \cite{Biometrika_MallerZhou_1992} and \cite{SPL_Shen_2000} remain to be investigated.
The simulation studies we conduct in Section~\ref{Simu_stud} reveal that the empirical levels of the test proposed by \cite{Book_MallerZhou_1996} are significantly higher than the nominal level especially when the follow-up period increases, and thus this testing procedure may incorrectly check the problem of testing for sufficient follow-up.

Whether follow-up is sufficient has a significant impact on the estimation of the cure proportion.
When the survival time is exposed to random right censoring, the non-cured individuals can observe either failure or censoring, whereas only censoring can be observed for all cured individuals.
Hence, a complicated problem arises regarding the identification of the cure rate due to the lack of data in the right tail of the survival function.
A solution to this problem is to impose the assumption of sufficient follow-up.
Under this assumption, \cite{Biometrika_MallerZhou_1992} suggested  a consistent nonparametric estimator of the cure rate that corresponds to the height of the plateau of the KME of the survival function.
In addition, \cite{CJS_XuPeng_2014} further proposed a nonparametric estimator of the cure rate in the presence of covariates and established its consistency and asymptotic normality.

When the assumption of sufficient follow-up is not satisfied, however, the nonparametric estimators proposed by \cite{Biometrika_MallerZhou_1992} and \cite{CJS_XuPeng_2014} will overestimate the cure rate, since the data do not provide enough information on the right tail of the survival function.
To tackle this problem, \cite{JRSSB_MikaelIngrid_2019} first considered extrapolation techniques from extreme value theory \citep{Book_DeFerreira_2006} to avoid the assumption of sufficient follow-up and corrected the estimator proposed by \cite{Biometrika_MallerZhou_1992}, which has also been extended to the case of the presence of covariates by \cite{CSDA_EscobaVanKeilegom_2023}.
The vital assumption imposed by \cite{JRSSB_MikaelIngrid_2019} is that the survival function of the non-cured individuals belongs to the Fr\'{e}chet domain of attraction.
\cite{Biometrika_Mikael_2022} further considered the case where the survival function of the non-cured individuals is in the Gumbel domain of attraction that covers most of the parametric distributions assumed in practice for analyzing survival data, including Weibull, log-normal, and Gamma distributions, among others.
Likewise, we assume that the distribution of the non-cured individuals belongs to the Gumbel domain of attraction in this article.

Extreme value theory is frequently utilized in the literature of survival analysis with right-censored data.
In addition to the aforementioned articles that applied extreme value theory to estimate the cure rate,
extreme value theory is also considered to estimate the extreme value index \citep[see, e.g.][]{SAJ_BeirlantGuillou_2001, Bernoulli_EinmahlFils_2008, Extremes_WormsWorms_2014,JMA_Stupfler_2016}.
Recently, \cite{Extremes_Maller_2022} developed the asymptotic distributions of statistics relevant to testing for the presence of cured individuals in the population through the application of extreme value theory.
\cite{Bernoulli_MallerEtal_2022} further studied the asymptotic distributions of the largest uncensored and largest observed survival times for survival data with a cure fraction.
Here, we will show that it becomes possible to construct a new test for the assumption of sufficient follow-up by using extreme value theory.

In this article, we propose a new test statistic for the null hypothesis of sufficient follow-up.
Instead of being based on the difference between the largest uncensored and the largest observed survival times, the proposed test statistic involves the comparison of two nonparametric estimators of the cure rate.
We reject the null hypothesis of sufficient follow-up if the value of the test statistic is large.
Moreover, the asymptotic distribution of our test statistic is shown to be a normal distribution under the null hypothesis.
Extensive simulation studies confirm that our proposed test performs satisfactorily for finite samples.
In particular, under insufficient follow-up, our test still works well in comparison to \cite{Book_MallerZhou_1996}'s testing procedure.

The article is organized as follows.
In Section~\ref{Test_proce}, we describe in detail the proposed testing procedure.
Section~\ref{Asym_theo} shows some asymptotic results, and a bootstrap procedure is also employed to approximate the critical values of the test in this section.
Simulation studies are reported in Section~\ref{Simu_stud} to evaluate the finite-sample performance of the test.
In Section~\ref{Real_data}, we illustrate the practical applicability through the analysis of both leukemia and breast cancer datasets.
Section~\ref{Diss} presents a discussion.
The detailed proofs of the main results are presented in the Appendix.

\section{Testing procedure}\label{Test_proce}

We consider a random right censorship model, and let $T$ be the survival time and $C$ be the random right censoring time.
The distribution functions of $T$ and $C$ are denoted as $F$ and $F_c$, respectively.
The population is divided into two groups consisting of the non-cured and cured individuals with proportions $p$ and $1-p$, $0<p<1$. The cured individuals, also called immune, never experience the event of interest and hence have by convention an infinite survival time. This leads to an improper distribution $F$ with total mass strictly less than 1 and the representation
\begin{eqnarray}
\label{eq::cure_model}
F(t)=\mathbb{P}(T \leq t)=pF_0(t),
\end{eqnarray}
where $F_0$ is a proper distribution function for the non-cured population, also called susceptibles. The value of $p$ actually represents the probability $\mathbb{P}(T<+\infty)$ while $1-p$ can be seen as the proportion of immunes, called the cure rate. The presence of right-censoring only allows us to observe the couple $(Y,\delta)$, where $Y=\min(T,C)$ and $\delta=\ind (T \leq C)$. Due to some identification issues, we impose the common assumption that $T$ and $C$ are independent. This particularly ensures that the distribution function $H$ of the observed random time $Y$ satisfies $1-H=(1-F)(1-F_c)$. In the sequel, we denote for convenience $\bar{G}=1-G$ and $\tau_G=\sup\{t\geq 0: G(t)<1\}$ as the right extreme for any generic distribution function $G$.

Let $\{(Y_i,\delta_i)\}_{1\leq i\leq n}$ be a $n$-sized random sample drawn from the aforementioned model, and $\tilde{t}_{(1)}<\tilde{t}_{(2)}<\ldots<\tilde{t}_{(K)}$ represent $K$ distinct uncensored survival times.
\cite{Biometrika_MallerZhou_1992} suggested an estimator of $p$ based on \eqref{eq::cure_model} by returning the maximum value of an estimate of the distribution $F$. Specifically, they proposed the following nonparametric estimator
\begin{eqnarray}\label{phat}
\hat{p}_n=\hat{F}_n(t_{(n)}),
\end{eqnarray}
where $t_{(n)}=\max\{Y_i,i=1,\ldots,n\}$ is the largest observed time in the sample, $\hat{F}_n(t)$ is the KME of $F$ defined as
\begin{eqnarray*}
\hat{F}_n(t)=1-\prod_{\tilde{t}_{(i)}\leq t}\left(1-\frac{d_i}{D_i}\right),
\end{eqnarray*}
$D_i$ is the number of individuals at risk at $\tilde{t}_{(i)}$, and $d_i$ is the number of individuals having failure at $\tilde{t}_{(i)}$.
As discussed in \cite{Biometrika_MallerZhou_1992}, $\hat{p}_n$ is consistent if and only if the assumption of sufficient follow-up is valid, which is equivalent in this case to $\tau_{F_{0}} \leq \tau_{F_c}$.
The case of insufficient follow-up however, when $\tau_{F_{0}} > \tau_{F_c}$, yields an underestimation of $p$ by $\hat{p}_n$.

Recently, \cite{Biometrika_Mikael_2022} proposed a nonparametric estimator $\hat{p}_G(\epsilon)$ of $p$ when $\bar{F}_0$ is in a class of particular slowly varying functions, where the definition of $\epsilon$ is given below.
More precisely, they assumed that the susceptible distribution is in the Gumbel domain of attraction.
This condition invokes the tail behaviour of the distribution and considers that there exists a positive function $l$ such that for all $y\in\mathbb{R}$,
\begin{eqnarray}
\lim_{t \rightarrow \tau_{F_0}}\dfrac{\bar{F}_0(t+yl(t))}{\bar{F}_0(t)}=e^{-y}.
\end{eqnarray}
In their paper, they also imposed a condition, namely the Von Mises condition \citep[p.42, Eq. 1.8]{Book_Resnick_1987}, for the distribution $F_0$.
This condition is a sufficient but not necessary condition for a distribution belonging to a domain of attraction.
For the Gumbel domain of attraction, the Von Mises condition says that $F_0$ has a finite second derivative $F''_0$ in a neighbourhood $(\tau_{F_0}-\epsilon, \tau_{F_0})$ for some $\epsilon \in (0,\tau_{F_0})$ with
\begin{eqnarray} \label{vonmisescondition}
\lim_{t\rightarrow \tau_{F_0}} \frac{F_0''(t)\bar{F}_0(t)}{\{F_0'(t)\}^2}=-1.
\end{eqnarray}
In practice, most of the distributions in the Gumbel domain satisfy the Von Mises condition, such as Weibull, log-normal and Gamma distributions.
The nonparametric estimator is then given by
\begin{eqnarray} \label{pGhat}
\hat{p}_G(\epsilon)=\hat{F}_n(t_{(n)}-\epsilon)+\frac{\{\hat{F}_n(t_{(n)}-\epsilon/2)-\hat{F}_n(t_{(n)}-\epsilon)\}^2}
 {2 \hat{F}_n(t_{(n)}-\epsilon/2)-\hat{F}_n(t_{(n)}-\epsilon)-\hat{F}_n(t_{(n)})}.
\end{eqnarray}
Under some assumptions, \cite{Biometrika_Mikael_2022} showed that the latter estimator $\hat{p}_G(\epsilon)$ converges in probability to
\begin{eqnarray}\label{PG}
p_G(\epsilon)=F(\tau_{H}-\epsilon)+\frac{\{F(\tau_{H}-\epsilon/2)-F(\tau_{H}-\epsilon)\}^2}
 {2 F(\tau_{H}-\epsilon/2)-F(\tau_{H}-\epsilon)-F(\tau_{H})}
\end{eqnarray}
as $n \rightarrow \infty$, and at the same time, converges to $p$ as $\epsilon \rightarrow 0$ and $\tau_{F_c} \rightarrow \tau_{F_0}$.
Roughly speaking, the idea relies on correcting the estimator $\hat{p}_n$ by some additive statistics based on the extreme value condition, compensating the underestimation of $\hat{p}_n$ under insufficient follow-up. Reversely, this latter correction turns out to be almost null under sufficient follow-up.

In this article, we are interested in testing the assumption of sufficient follow-up.
The hypotheses are
\begin{eqnarray*}
H_0: \tau_{F_0} \leq \tau_{F_c}\quad \text{versus}\quad H_1: \tau_{F_0} > \tau_{F_c}.
\end{eqnarray*}
The above arguments motivate us to propose a test statistic through the difference between $\hat{p}_n$ and $\hat{p}_G(\epsilon)$. Specifically, we define the test statistic as
\begin{eqnarray*}
T_n=\hat{p}_G(\epsilon)-\hat{p}_n.
\end{eqnarray*}
Under sufficient follow-up, $T_n$ will be close to zero while under insufficient follow-up, $\hat{p}_G(\epsilon)$ will be strictly larger than $\hat{p}_n$ and so $T_n$ will be strictly positive.
For large values of $T_n$, we hence expect to reject the null hypothesis $H_0$. To determine the rejection threshold for the null hypothesis, we will establish in the next section the asymptotic normality of $T_n$ under $H_0$.

We mention that under insufficient follow-up, $\epsilon$ belongs to the interval $[0,\tau_{F_c}]$.
Under sufficient follow-up, the assumption that $\tau_{F_c}<\infty$ is necessary to ensure that the denominator of \eqref{PG} is not equal to $0$, and it is easy to prove that $\hat{p}_G(\epsilon)$ converges in probability to~\eqref{PG} as $n \rightarrow \infty$.
However, it is worthwhile to point out that when $\tau_{F_0} \leq \tau_{F_c}$ and $\epsilon \in [0, \tau_{F_c}-\tau_{F_0}]$, the denominator of \eqref{PG} equals $0$. To avoid this case, we impose that $\epsilon$ belongs to the interval $\Theta=(0\vee (\tau_{F_c}-\tau_{F_0}), \tau_{F_c}]$. In practice, $\epsilon$ can be chosen in the interval $(t_{(n)}-\tilde{t}_{(K)}, t_{(n)})$, where $\tilde{t}_{(K)}$ is the largest uncensored observed survival time.

\section{Asymptotic theory}\label{Asym_theo}

In this section, we will establish the asymptotic normality of the test statistic $T_n$. With this in mind, we state the following necessary assumptions.
\begin{enumerate}
\item[1.] $F$ is continuous at $\tau_{F_0}$ in case $\tau_{F_0}<\infty$.
\item[2.] $\int_{[0,\tau_{F_0}]}d F_0(s)/\{1-F_c(s^{-})\}< \infty$.
\item[3.] $\lim_{n\rightarrow \infty}n \bar{F}_c(\tau_{F_c}-\xi/\sqrt{n})=\infty$ for each $\xi> 0$.
\end{enumerate}

In practice, Assumption~1 is valid for a wide class of distributions and Assumption~2 is a sufficient condition for $\tau_{F_0} \leq \tau_{F_c}$.
The limit $\lim_{n\rightarrow \infty}n \bar{F}_c(\tau_{F_c}-\xi/\sqrt{n})=\infty$ in Assumption~3 holds if the censoring distribution $F_c$ possesses a finite right extreme value $\tau_{F_c}$ and has a positive derivative in a left neighborhood of $\tau_{F_c}$.
This condition is easily achievable in real-world situations.
For instance, the uniform distribution on $[0,\tau]$ has a positive derivative of $1/\tau$ for any $\tau<\infty$ and thus satisfies this condition.

The asymptotic normality of the KME of $F$ \citep[Theorem 3.14]{Book_MallerZhou_1996} plays an important role in proving the asymptotic normality of $T_n$.
To do so, we first introduce some notations. Suppose $Z(t)$ is a Gaussian process on $[0,\tau_H]$ with zero mean and variance function
\begin{eqnarray*}
v(t)=\int_{[0,t]}\frac{d F(s)}{\bar{F}(s)\{1-F(s^{-})\}\{1-F_c(s^{-})\}}.
\end{eqnarray*}
We also define $D[0,\tau_H]$ as the functional space of c\`{a}dl\`{a}g real functions on $[0,\tau_H]$ associated with Skorokhod's topology. The asymptotic normality of the KME of $F$ is summarized in the following theorem.
\begin{theorem}[Theorem 3.14 from \cite{Book_MallerZhou_1996}]\label{theorem1}
Suppose that $F$ is continuous at $\tau_H$ in case $\tau_H< \infty$.
Suppose furthermore that $\lim_{t\rightarrow \tau_H} \{F(\tau_H)-F(t)\}^2 v(t)=0$ and
\begin{eqnarray}\label{eq::theorem1:1}
\lim_{t\rightarrow \tau_H} \int_{(t,\tau_H)} \frac{I\{0 \leq F_c(s^{-}) <1\}\bar{F}(s)}{\{1-F_c(s^{-})\}\{1-F(s^{-})\}}d F(s)=0.
\end{eqnarray}
Then $\lim_{t\rightarrow \tau_H} \bar{F}(t)Z(t)$ exists and is finite almost surely, and for each $t \geq 0$,
\begin{eqnarray}\label{Xn_expression}
X_n(t)=\sqrt{n}\frac{\bar{F}(t)}{\bar{F}(t\wedge t_{(n)})} \{\hat{F}_n(t\wedge t_{(n)})-F(t\wedge t_{(n)})\}
\end{eqnarray}
converges weakly in $D[0,\tau_H]$ to $X(t)=I(t \in [0, \infty))\bar{F}(t)Z(t)+I(t=\infty)R$ as $n \rightarrow \infty$,
where
\begin{eqnarray}
R=\left\{
\begin{aligned}
&\bar{F}(\tau_H)Z(\tau_H)\quad \text{if}\quad H(\tau_{H^{-}})<1, \\
&\lim_{t\rightarrow \tau_H} \bar{F}(t)Z(t) \quad \text{if}\quad H(\tau_{H^{-}})=1.
\end{aligned}
\right.
\end{eqnarray}
\end{theorem}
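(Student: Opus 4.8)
The plan is to exploit the martingale structure underlying the product-limit estimator and to treat the interior $[0,t_0]$ with $t_0<\tau_H$ and the right endpoint $\tau_H$ by separate arguments: on the interior I would invoke a martingale central limit theorem, whereas the behaviour at $\tau_H$ — the genuinely delicate part — would be controlled by the two tail hypotheses. First I would set up the counting-process representation. Writing $N(t)=\sum_{i=1}^n\ind(Y_i\le t,\delta_i=1)$, $\bar Y(t)=\sum_{i=1}^n\ind(Y_i\ge t)$, $\Lambda$ for the cumulative hazard of $F$, and $M=N-\int_{(0,\cdot]}\bar Y\,d\Lambda$ for the associated square-integrable martingale, Duhamel's identity expresses the normalized error as a predictable martingale transform,
\begin{eqnarray*}
\frac{\hat F_n(t)-F(t)}{\bar F(t)}=\int_{(0,t]}\frac{1-\hat F_n(s^-)}{\bar F(s)}\,\frac{dM(s)}{\bar Y(s)}.
\end{eqnarray*}
Since $\{1-\hat F_n(s^-)\}/\bar F(s)\to 1$ uniformly on compact subsets of $[0,\tau_H)$ by consistency of the KME, the leading term of $\sqrt{n}\,\{\hat F_n(t)-F(t)\}/\bar F(t)$ is the martingale integral $\sqrt{n}\int_{(0,t]}dM(s)/\bar Y(s)$, whose predictable variation equals $n\int_{(0,t]}d\langle M\rangle/\bar Y^2=n\int_{(0,t]}d\Lambda/\bar Y$ in the continuous case.

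Next I would identify the limit. By the uniform law of large numbers $\bar Y(s)/n\to\bar F(s^-)\bar F_c(s^-)$ on compacts, so the predictable variation converges to
\begin{eqnarray*}
\int_{(0,t]}\frac{d\Lambda(s)}{\bar F(s^-)\bar F_c(s^-)}=\int_{[0,t]}\frac{dF(s)}{\bar F(s)\{1-F(s^-)\}\{1-F_c(s^-)\}}=v(t),
\end{eqnarray*}
which pins down the Gaussian martingale $Z$, a time-changed Brownian motion with variance $v$, as the limit. Checking that the jumps of the integral are of order $n^{-1/2}$ and hence asymptotically negligible, Rebolledo's martingale central limit theorem yields weak convergence of $X_n/\bar F$ to $Z$ in $D[0,t_0]$ for every $t_0<\tau_H$; multiplying by the continuous deterministic factor $\bar F$ then gives $X_n\Rightarrow \bar F(\cdot)Z(\cdot)$ on each such interval. (In the presence of atoms the same conclusion holds, using the exact predictable variation $\int \bar Y(1-\Delta\Lambda)\,d\Lambda$, which reproduces $v$ after cancellation with the factor $\{1-\hat F_n(s^-)\}^2/\bar F(s)^2$.)

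The main obstacle is upgrading this to convergence on all of $[0,\tau_H]$ and identifying the endpoint value $R$. I would first show that $\lim_{t\to\tau_H}\bar F(t)Z(t)$ exists and is finite almost surely: using the independent-increments structure of $Z$ and a Doob-type maximal inequality, the oscillation of $\bar F(t)Z(t)$ over $(t,\tau_H)$ is bounded in terms of $\{F(\tau_H)-F(t)\}^2 v(t)$, which vanishes by hypothesis, giving an almost-sure Cauchy criterion and hence the limit. The same hypothesis, together with condition~\eqref{eq::theorem1:1}, bounds the tail predictable variation of $X_n$ uniformly in $n$, which supplies the tightness near $\tau_H$ needed to extend the weak convergence from $D[0,t_0]$ to all of $D[0,\tau_H]$ under the Skorokhod topology.

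Finally I would separate the two regimes defining $R$. When $H(\tau_{H^-})<1$ there is a positive limiting mass of observations reaching $\tau_H$, so $\bar Y(\tau_H)/n$ stays bounded away from zero, $v$ is finite at $\tau_H$, and the endpoint is simply $R=\bar F(\tau_H)Z(\tau_H)$. When $H(\tau_{H^-})=1$ the risk set is depleted continuously, $v(\tau_H)$ may be infinite, and $R$ must instead be taken as the limit $\lim_{t\to\tau_H}\bar F(t)Z(t)$ constructed above. The crux throughout is that near $\tau_H$ the denominator $\bar Y(s)$ becomes small and destabilizes both the law of large numbers for $\bar Y/n$ and the martingale integral; it is precisely the two tail conditions that tame this instability and guarantee that the contribution of the unobserved right tail to $X_n$ matches the limiting $R$.
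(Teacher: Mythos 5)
The first thing to note is that the paper does not prove this statement at all: Theorem~\ref{theorem1} is imported verbatim as Theorem~3.14 of \cite{Book_MallerZhou_1996} and is used only as an ingredient in the proof of Theorem~\ref{theorem2}, so there is no paper-internal proof to compare against. Your outline is the standard counting-process route by which this result is established in the cited source and in its antecedents (Gill's work on the product-limit estimator on the whole line, from which the two tail conditions are taken): Duhamel's identity turns $\{\hat F_n-F\}/\bar F$ into a martingale transform, Rebolledo's central limit theorem gives convergence on $[0,t_0]$ for $t_0<\tau_H$ with limiting variance $v$, and the two hypotheses are exactly what is needed to push the convergence to the endpoint. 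Your parenthetical remark about atoms is also correct: with $d\langle M\rangle=\bar Y(1-\Delta\Lambda)\,d\Lambda$ the limiting predictable variation simplifies to precisely the stated $v(t)$, including the asymmetric $\bar F(s)$ versus $1-F(s^-)$ factors.

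That said, as a proof it is a sketch with one genuinely unfinished step, and it is the step on which the whole theorem turns. The existence of $\lim_{t\to\tau_H}\bar F(t)Z(t)$ and, more importantly, the uniform-in-$n$ tightness of $X_n$ near $\tau_H$ are asserted to follow from $\{F(\tau_H)-F(t)\}^2v(t)\to 0$ and condition~\eqref{eq::theorem1:1} via ``a Doob-type maximal inequality'' and a bound on ``the tail predictable variation,'' but no such bound is derived. This is not a routine verification: near $\tau_H$ the at-risk process $\bar Y$ is of small, random order, the law of large numbers for $\bar Y/n$ fails, and the predictable variation of the transform is not controlled pointwise by $v$; this is exactly where the historically known error in the original endpoint argument occurred and where the stopping at $t_{(n)}$ built into the definition of $X_n$ (which your sketch never uses explicitly) must enter. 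You would also need to say something about the two cases defining $R$ beyond the heuristic that $v(\tau_H)$ is finite when $H(\tau_{H^-})<1$. In short: right architecture, consistent with the source the paper cites, but the decisive tail-tightness lemma is named rather than proved.
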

The following theorem summarizes the asymptotic normality of $T_n$ under sufficient follow-up.
\begin{theorem}\label{theorem2}
Suppose that Assumptions~1-3 hold and $F_0$ satisfies \eqref{vonmisescondition}.
Suppose also that $\tau_{F_c} < 2 \tau_{F_0}$.
Then for each $\epsilon \in (2(\tau_{F_c}-\tau_{F_0}), \tau_{F_c})$, the test statistic $T_n$ is asymptotically normal, $\sqrt{n}(T_n -b_{\epsilon})\rightarrow N(0, \sigma^2_{\epsilon})$ in distribution as $n \rightarrow \infty$, where the asymptotic bias is
\begin{eqnarray*}
b_{\epsilon}=F(\tau_{H}-\epsilon)+\frac{\{F(\tau_{H}-\epsilon/2)-F(\tau_{H}-\epsilon)\}^2}
 {2 F(\tau_{H}-\epsilon/2)-F(\tau_{H}-\epsilon)-F(\tau_{H})}-F(\tau_{H}),
\end{eqnarray*}
the asymptotic variance is
\begin{eqnarray*}
\sigma^2_{\epsilon}=\sum_{i=1}^3\sum_{j=1}^3 s_i s_j \bar{F}\{\tau_H+\epsilon(i-3)/2\}
\bar{F}\{\tau_H+\epsilon(j-3)/2\} v\{\tau_H+\epsilon(i\wedge j-3)/2\},
\end{eqnarray*}
and $s_1$, $s_2$, and $s_3$ are provided in the Appendix as \eqref{s1}, \eqref{s2} and \eqref{s3}, respectively.
\end{theorem}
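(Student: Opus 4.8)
The plan is to recognise $T_n$ as a smooth functional of the Kaplan--Meier estimator evaluated at three abscissae and then to combine the weak convergence of Theorem~\ref{theorem1} with the delta method. Writing $g(a,b,c)=a+(b-a)^2/(2b-a-c)-c$, we have $T_n=g(\hat V_n)$ with $\hat V_n=\big(\hat F_n(t_{(n)}-\epsilon),\,\hat F_n(t_{(n)}-\epsilon/2),\,\hat F_n(t_{(n)})\big)$, while the deterministic target $V=\big(F(\tau_H-\epsilon),\,F(\tau_H-\epsilon/2),\,F(\tau_H)\big)$ satisfies $g(V)=b_\epsilon$. Under $H_0$ the identity $\bar H=\bar F\,\bar F_c$ together with $\bar F\geq 1-p>0$ forces $\tau_H=\tau_{F_c}$, so the three abscissae all cluster near $\tau_H$.

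First I would pin down the geometry of the evaluation points. Using Assumption~3 and $\mathbb{P}(t_{(n)}<\tau_H-\xi/\sqrt{n})=H(\tau_H-\xi/\sqrt{n})^n$ with $\bar F(\tau_H)=1-p>0$, I would show $\sqrt{n}\,(\tau_H-t_{(n)})\to 0$ in probability, and likewise that the largest uncensored time $\tilde t_{(K)}$ tends to $\tau_{F_0}$. The restriction $\epsilon\in(2(\tau_{F_c}-\tau_{F_0}),\tau_{F_c})$ is then used to guarantee $\tau_H-\epsilon<\tau_H-\epsilon/2<\tau_{F_0}$: the two inner abscissae sit strictly inside the support of the uncensored observations, where $\hat F_n$ estimates $F$ at the $\sqrt{n}$ rate, whereas the outer abscissa $\tau_H$ lies on the plateau. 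This is exactly what keeps the denominator $2b-a-c$ bounded away from $0$ at $V$, so that $g$ is continuously differentiable there; were $\epsilon$ smaller, the middle point would fall on the plateau, forcing the limiting $b=c$ and degenerating the variance.

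Next I would check the hypotheses of Theorem~\ref{theorem1}. Assumption~1 gives continuity of $F$ at $\tau_H$, and Assumption~2, through $\bar F(s)\{1-F(s^-)\}\geq(1-p)^2$, yields $v(\tau_H)<\infty$; this finiteness in turn delivers $\{F(\tau_H)-F(t)\}^2 v(t)\to 0$ and \eqref{eq::theorem1:1}, the latter by comparison of its integrand with that of $v$ (since $\bar F\leq 1$) and a convergent-tail argument. Theorem~\ref{theorem1} then gives $X_n\Rightarrow X=\bar F\,Z$ in $D[0,\tau_H]$; since $Z$ has independent increments with variance function $v$, the limit has covariance $\mathrm{Cov}\big(\bar F(s)Z(s),\bar F(t)Z(t)\big)=\bar F(s)\bar F(t)\,v(s\wedge t)$ at interior points, and the endpoint value is $R=\bar F(\tau_H)Z(\tau_H)$.

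The decisive step is to transfer this convergence from the fixed abscissae to the data-dependent ones. For each inner abscissa $\tau_H-\eta$ with $\eta\in\{\epsilon,\epsilon/2\}$ I would split $\sqrt{n}\{\hat F_n(t_{(n)}-\eta)-F(\tau_H-\eta)\}$ into the standardised process at the random point, its value at the fixed point, and the deterministic increment $\sqrt{n}\{F(t_{(n)}-\eta)-F(\tau_H-\eta)\}$; the last is $O_P(\sqrt{n}\,|t_{(n)}-\tau_H|)=o_P(1)$ by the smoothness of $F$ from \eqref{vonmisescondition}, and the gap between the process at the random and fixed points vanishes by the asymptotic equicontinuity implied by convergence in $D[0,\tau_H]$ to a continuous limit. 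For the third coordinate, $\hat F_n(t_{(n)})$ equals $\hat F_n(\tau_H)$ because the Kaplan--Meier curve is flat beyond $\tilde t_{(K)}$, so $\sqrt{n}\{\hat F_n(t_{(n)})-F(\tau_H)\}\to R$ after the same negligible deterministic adjustment. Hence $\sqrt{n}(\hat V_n-V)$ converges jointly to the centred Gaussian vector with components $\bar F(u_i)Z(u_i)$, $u_i=\tau_H+\epsilon(i-3)/2$, $i=1,2,3$, whose covariance matrix $\Sigma$ has entries $\bar F(u_i)\bar F(u_j)\,v(u_{i\wedge j})$. The delta method with $\nabla g(V)=(s_1,s_2,s_3)$ then yields $\sqrt{n}(T_n-b_\epsilon)\to N(0,\sigma^2_\epsilon)$, where $\sigma^2_\epsilon=\nabla g(V)^{\top}\Sigma\,\nabla g(V)=\sum_{i=1}^3\sum_{j=1}^3 s_i s_j\,\Sigma_{ij}$, which is precisely the stated expression. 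I expect the transfer argument of this final paragraph---controlling the estimator at random abscissae---to be the main obstacle, and it rests squarely on the rate $\sqrt{n}(\tau_H-t_{(n)})\to 0$ supplied by Assumption~3.
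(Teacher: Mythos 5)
Your proposal is correct and follows essentially the same route as the paper's proof: localizing $t_{(n)}$ at rate $\sqrt{n}$ via Assumption~3, transferring the weak convergence of $X_n$ from fixed to data-dependent abscissae, and linearizing $T_n$ around $\bigl(F(\tau_H-\epsilon),F(\tau_H-\epsilon/2),F(\tau_H)\bigr)$ --- your explicit delta method with $\nabla g(V)=(s_1,s_2,s_3)$ reproduces exactly the coefficients the paper obtains by hand from its algebraic decomposition.
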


The asymptotic normality of $T_n$ is based on Theorem~\ref{theorem1}.
In Theorem~\ref{theorem2}, we impose the condition that $\tau_{F_c} < 2 \tau_{F_0}$.
This condition can also be found in Theorem~2 in \cite{CJS_MallerEtal_2023} and guarantees that $2(\tau_{F_c}-\tau_{F_0})$ is strictly less than $\tau_{F_c}$.
When $\epsilon$ falls within the interval $(\tau_{F_c}-\tau_{F_0}, 2(\tau_{F_c}-\tau_{F_0})]$, the asymptotic distribution of $T_n$ is degenerate, and thus the range of values for $\epsilon$ in Theorem~\ref{theorem2} is limited to the interval $(2(\tau_{F_c}-\tau_{F_0}), \tau_{F_c})$.
It is worth mentioning that $F(\tau_{H})$ is equal to $p$ since $\tau_{H}=\tau_{F_c}$ and $\tau_{F_0} \leq \tau_{F_c}$.
Besides, we can see that the asymptotic bias $b_{\epsilon}$ converges to $0$ under $H_0$ as $\epsilon$ goes to $2(\tau_{F_c}-\tau_{F_0})$.
The proof of Theorem \ref{theorem2} is outlined in the Appendix.

Through the preceding discussion, we successfully established the asymptotic normality of the test statistic.
However, for small sample sizes, the normal approximation given in Theorem~\ref{theorem2} does not perform well.
Therefore, a bootstrap procedure is used to approximate the critical values of the test.
The proposed bootstrap procedure comprises the following sequential steps.

\begin{enumerate}
\item[(1)] Use the original sample to calculate the test statistic $T_n$.
\item[(2)] Draw $B$ naive bootstrap samples with replacement from the original sample.
\item[(3)] For $b=1, \ldots , B$, calculate the bootstrap test statistic $T_n^b$ by using the $b$-th naive bootstrap sample.
\item[(4)] The critical value of the test is approximated by the $\{(1-\alpha)B\}$th quantile of $\{T_n^1-T_n,\ldots,T_n^B-T_n\}$, where $\alpha$ is the level of the test.
\end{enumerate}

\section{Simulation studies}\label{Simu_stud}

In the following simulation studies, we consider three distributions for the failure times of the susceptible patients:
\begin{itemize}
\item[(i)] a Weibull distribution with shape parameter $1.5$ and scale parameter $1.5$;
\item[(ii)] an exponential distribution with parameter 1;
\item[(iii)] a standard log-normal distribution.
\end{itemize}
In addition, we generate the censoring variable from a Weibull distribution with shape parameter 1 and scale parameter $\lambda = 1.5, 2, 2.5, 3$ under the null hypothesis, and from uniform distributions $U[0,\mu]$ with $\mu =2, 2.5, 3, 3.5$ under the alternative hypothesis.
It is readily apparent that smaller values of $\lambda$ correspond to more censoring.
Similarly, smaller values of $\mu$ mean a shorter follow-up period while large values represent light-censoring.
Two values for the non-cure rate have been chosen with $p$ equal to $0.7$ or $0.9$, representing moderate and high non-cure rates. Four sample sizes are considered with $n=400,800,1200,1800$.
The nominal level equals $\alpha=0.05$ and all simulation results are based on $1000$ simulation runs.

\begin{table}
 \centering
 \def\~{\hphantom{0}}
  \caption{Simulation results of our test, when the failure time of susceptible patients follows a Weibull distribution with shape parameter 1.5 and scale parameter 1.5 \label{Tab1}}
       \small
  \begin{tabular*}{\textwidth}{@{} *{15}{c@{\extracolsep{\fill}}}}
  \hline
  & & \multicolumn{2}{c}{$\lambda=3$} & \multicolumn{2}{c}{$\lambda=2.5$} &\multicolumn{2}{c}{$\lambda=2$} &\multicolumn{2}{c}{$\lambda=1.5$}\\ [1pt]
     \cline{3-4} \cline{5-6}  \cline{7-8} \cline{9-10}\\ [-6pt]
& & $p=0.9$ & $p=0.7$ & $p=0.9$ & $p=0.7$ & $p=0.9$ & $p=0.7$ & $p=0.9$ & $p=0.7$ \\
& & $27\%$ cens & $43\%$ cens & $30\%$ cens & $45\%$ cens & $33\%$ cens & $48\%$ cens & $39\%$ cens & $52\%$ cens \\
& $n$ & $10\%$ cure & $30\%$ cure & $10\%$ cure & $30\%$ cure & $10\%$ cure & $30\%$ cure & $10\%$ cure & $30\%$ cure \\
$H_0$ & ~400 & 0.036  & 0.067 & 0.041  & 0.056 & 0.049  & 0.043 & 0.106  & 0.038  \\
      & ~800 & 0.052  & 0.053 & 0.043  & 0.053 & 0.033  & 0.042 & 0.061  & 0.031  \\
      & 1200 & 0.058  & 0.047 & 0.055  & 0.057 & 0.045  & 0.044 & 0.035  & 0.031  \\
      & 1800 & 0.043  & 0.047 & 0.058  & 0.047 & 0.038  & 0.048 & 0.038  & 0.038  \\
       \hline
& & \multicolumn{2}{c}{$\mu=3.5$} & \multicolumn{2}{c}{$\mu=3$} &\multicolumn{2}{c}{$\mu=2.5$} &\multicolumn{2}{c}{$\mu=2$}\\
     \cline{3-4} \cline{5-6}  \cline{7-8} \cline{9-10}\\ [-6pt]
& & $p=0.9$ & $p=0.7$ & $p=0.9$ & $p=0.7$ & $p=0.9$ & $p=0.7$ & $p=0.9$ & $p=0.7$ \\
& & $28\%$ cens & $44\%$ cens & $31\%$ cens & $46\%$ cens & $34\%$ cens & $49\%$ cens & $40\%$ cens & $53\%$ cens \\
&$n$ & $10\%$ cure & $30\%$ cure & $10\%$ cure & $30\%$ cure & $10\%$ cure & $30\%$ cure & $10\%$ cure & $30\%$ cure \\
$H_1$ & ~400 & 0.711  & 0.134& 0.860  & 0.361& 0.922  & 0.640& 0.852  & 0.652  \\
      & ~800 & 0.930  & 0.400& 0.971  & 0.538& 0.973  & 0.725& 0.930  & 0.811  \\
      & 1200 & 0.990  & 0.812& 0.992  & 0.812& 0.990  & 0.891& 0.971  & 0.908  \\
      & 1800 & 0.996  & 0.939& 0.996  & 0.957& 0.994  & 0.985& 0.974  & 0.942  \\
\hline
\end{tabular*}
\bigskip \smallskip \smallskip
  \begin{tablenotes}
  \item NOTE: $H_0$, the null hypothesis; $H_1$, the alternative hypothesis; $n$, the sample size; $p$, the non-cure rate; $\lambda$, the parameter of the censoring distribution under $H_0$; $\mu$, the parameter of the censoring distribution under $H_1$.
  \end{tablenotes}
\end{table}

\begin{table}
 \centering
 \def\~{\hphantom{0}}
  \caption{Simulation results of our test, when the failure time of susceptible patients follows an exponential distribution with parameter 1 \label{Tab2}}
       \small
  \begin{tabular*}{\textwidth}{@{} *{15}{c@{\extracolsep{\fill}}}}
  \hline
& & \multicolumn{2}{c}{$\lambda=3$} & \multicolumn{2}{c}{$\lambda=2.5$} &\multicolumn{2}{c}{$\lambda=2$} &\multicolumn{2}{c}{$\lambda=1.5$}\\
     \cline{3-4} \cline{5-6}  \cline{7-8} \cline{9-10}\\ [-6pt]
& & $p=0.9$ & $p=0.7$ & $p=0.9$ & $p=0.7$ & $p=0.9$ & $p=0.7$ & $p=0.9$ & $p=0.7$ \\
& & $33\%$ cens & $47\%$ cens & $36\%$ cens & $50\%$ cens & $40\%$ cens & $53\%$ cens & $46\%$ cens & $58\%$ cens \\
 & $n$ & $10\%$ cure & $30\%$ cure & $10\%$ cure & $30\%$ cure & $10\%$ cure & $30\%$ cure & $10\%$ cure & $30\%$ cure \\
$H_0$ & ~400 & 0.053  & 0.049 & 0.038  & 0.043 & 0.055  & 0.049 & 0.086  & 0.035  \\
      & ~800 & 0.043  & 0.057 & 0.046  & 0.054 & 0.036  & 0.044 & 0.045  & 0.023  \\
      & 1200 & 0.054  & 0.047 & 0.042  & 0.051 & 0.046  & 0.047 & 0.042  & 0.038  \\
      & 1800 & 0.066  & 0.046 & 0.039  & 0.040 & 0.040  & 0.042 & 0.032  & 0.035  \\
       \hline
& & \multicolumn{2}{c}{$\mu=3.5$} & \multicolumn{2}{c}{$\mu=3$} &\multicolumn{2}{c}{$\mu=2.5$} &\multicolumn{2}{c}{$\mu=2$}\\
     \cline{3-4} \cline{5-6}  \cline{7-8} \cline{9-10}\\ [-6pt]
& & $p=0.9$ & $p=0.7$ & $p=0.9$ & $p=0.7$ & $p=0.9$ & $p=0.7$ & $p=0.9$ & $p=0.7$ \\
& & $35\%$ cens & $50\%$ cens & $38\%$ cens & $52\%$ cens & $43\%$ cens & $56\%$ cens & $49\%$ cens & $60\%$ cens \\
& $n$& $10\%$ cure & $30\%$ cure & $10\%$ cure & $30\%$ cure & $10\%$ cure & $30\%$ cure & $10\%$ cure & $30\%$ cure \\
$H_1$ & ~400 & 0.266  & 0.026 & 0.341  & 0.051 & 0.486  & 0.108 & 0.536  & 0.197  \\
      & ~800 & 0.496  & 0.075 & 0.624  & 0.111 & 0.705  & 0.139 & 0.747  & 0.178  \\
      & 1200 & 0.817  & 0.443 & 0.898  & 0.476 & 0.938  & 0.467 & 0.942  & 0.392  \\
      & 1800 & 0.956  & 0.776 & 0.981  & 0.789 & 0.992  & 0.759 & 0.995  & 0.695  \\
\hline
\end{tabular*}
\bigskip \smallskip \smallskip
  \begin{tablenotes}
  \item NOTE: $H_0$, the null hypothesis; $H_1$, the alternative hypothesis; $n$, the sample size; $p$, the non-cure rate; $\lambda$, the parameter of the censoring distribution under $H_0$; $\mu$, the parameter of the censoring distribution under $H_1$.
  \end{tablenotes}
\end{table}

\begin{table}
 \centering
 \def\~{\hphantom{0}}
  \caption{Simulation results of our test, when the failure time of susceptible patients follows a standard log-normal distribution \label{Tab3}}
       \small
  \begin{tabular*}{\textwidth}{@{} *{15}{c@{\extracolsep{\fill}}}}
  \hline
& & \multicolumn{2}{c}{$\lambda=3$} & \multicolumn{2}{c}{$\lambda=2.5$} &\multicolumn{2}{c}{$\lambda=2$} &\multicolumn{2}{c}{$\lambda=1.5$}\\
     \cline{3-4} \cline{5-6}  \cline{7-8} \cline{9-10}\\ [-6pt]
& & $p=0.9$ & $p=0.7$ & $p=0.9$ & $p=0.7$ & $p=0.9$ & $p=0.7$ & $p=0.9$ & $p=0.7$ \\
& & $41\%$ cens & $54\%$ cens & $44\%$ cens & $57\%$ cens & $50\%$ cens & $61\%$ cens & $56\%$ cens & $66\%$ cens \\
 & $n$ & $10\%$ cure & $30\%$ cure & $10\%$ cure & $30\%$ cure & $10\%$ cure & $30\%$ cure & $10\%$ cure & $30\%$ cure \\
$H_0$ & ~400 & 0.069  & 0.036& 0.056  & 0.031  & 0.089  & 0.033  & 0.138  & 0.056  \\
      & ~800 & 0.043  & 0.037& 0.047  & 0.033  & 0.073  & 0.034  & 0.125  & 0.029  \\
      & 1200 & 0.038  & 0.034& 0.040  & 0.027  & 0.055  & 0.029  & 0.086  & 0.029  \\
      & 1800 & 0.032  & 0.018& 0.044  & 0.031  & 0.050  & 0.021  & 0.096  & 0.021  \\
       \hline
& & \multicolumn{2}{c}{$\mu=3.5$} & \multicolumn{2}{c}{$\mu=3$} &\multicolumn{2}{c}{$\mu=2.5$} &\multicolumn{2}{c}{$\mu=2$}\\
     \cline{3-4} \cline{5-6}  \cline{7-8} \cline{9-10}\\ [-6pt]
& & $p=0.9$ & $p=0.7$ & $p=0.9$ & $p=0.7$ & $p=0.9$ & $p=0.7$ & $p=0.9$ & $p=0.7$ \\
& & $45\%$ cens & $57\%$ cens & $49\%$ cens & $60\%$ cens & $54\%$ cens & $64\%$ cens & $60\%$ cens & $69\%$ cens \\
&$n$ & $10\%$ cure & $30\%$ cure & $10\%$ cure & $30\%$ cure & $10\%$ cure & $30\%$ cure & $10\%$ cure & $30\%$ cure \\
$H_1$ & ~400 & 0.258  & 0.047 & 0.310  & 0.094 & 0.384  & 0.129& 0.483  & 0.218  \\
      & ~800 & 0.544  & 0.108 & 0.590  & 0.114 & 0.620  & 0.124& 0.647  & 0.196  \\
      & 1200 & 0.885  & 0.509 & 0.878  & 0.482 & 0.899  & 0.444& 0.884  & 0.327  \\
      & 1800 & 0.966  & 0.790 & 0.972  & 0.795 & 0.978  & 0.768& 0.977  & 0.676  \\
\hline
\end{tabular*}
\bigskip \smallskip \smallskip
  \begin{tablenotes}
  \item NOTE: $H_0$, the null hypothesis; $H_1$, the alternative hypothesis; $n$, the sample size; $p$, the non-cure rate; $\lambda$, the parameter of the censoring distribution under $H_0$; $\mu$, the parameter of the censoring distribution under $H_1$.
  \end{tablenotes}
\end{table}

\begin{table}
 \centering
 \def\~{\hphantom{0}}
  \caption{Simulation results of the test proposed by Maller and Zhou (1996), when the failure time of susceptible patients follows a Weibull distribution with shape parameter 1.5 and scale parameter 1.5 \label{Tab4}}
       \small
  \begin{tabular*}{\textwidth}{@{} *{15}{c@{\extracolsep{\fill}}}}
  \hline
& & \multicolumn{2}{c}{$\mu=3.5$} & \multicolumn{2}{c}{$\mu=3$} &\multicolumn{2}{c}{$\mu=2.5$} &\multicolumn{2}{c}{$\mu=2$}\\
     \cline{3-4} \cline{5-6}  \cline{7-8} \cline{9-10}\\ [-6pt]
& & $p=0.9$ & $p=0.7$ & $p=0.9$ & $p=0.7$ & $p=0.9$ & $p=0.7$ & $p=0.9$ & $p=0.7$ \\
& & $28\%$ cens & $44\%$ cens & $31\%$ cens & $46\%$ cens & $34\%$ cens & $49\%$ cens & $40\%$ cens & $53\%$ cens \\
& $n$& $10\%$ cure & $30\%$ cure & $10\%$ cure & $30\%$ cure & $10\%$ cure & $30\%$ cure & $10\%$ cure & $30\%$ cure \\
$\tilde{H}_0$ & ~400& 0.187  & 0.214  & 0.146  & 0.172  & 0.113  & 0.133  & 0.114  & 0.094  \\
          & ~800& 0.181  & 0.177  & 0.138  & 0.148  & 0.102  & 0.137  & 0.088  & 0.097  \\
          & 1200& 0.152  & 0.184  & 0.137  & 0.120  & 0.109  & 0.115  & 0.090  & 0.107  \\
          & 1800& 0.154  & 0.162  & 0.129  & 0.130  & 0.098  & 0.099  & 0.085  & 0.088  \\
           \hline
& & \multicolumn{2}{c}{$\lambda=3$} & \multicolumn{2}{c}{$\lambda=2.5$} &\multicolumn{2}{c}{$\lambda=2$} &\multicolumn{2}{c}{$\lambda=1.5$}\\
     \cline{3-4} \cline{5-6}  \cline{7-8} \cline{9-10}\\ [-6pt]
& & $p=0.9$ & $p=0.7$ & $p=0.9$ & $p=0.7$ & $p=0.9$ & $p=0.7$ & $p=0.9$ & $p=0.7$ \\
& & $27\%$ cens & $43\%$ cens & $30\%$ cens & $45\%$ cens & $33\%$ cens & $48\%$ cens & $39\%$ cens & $52\%$ cens \\
 & $n$ & $10\%$ cure & $30\%$ cure & $10\%$ cure & $30\%$ cure & $10\%$ cure & $30\%$ cure & $10\%$ cure & $30\%$ cure \\
$\tilde{H}_1$& ~400 & 0.985  & 0.999  & 0.965  & 0.997  & 0.907  & 0.987  & 0.789  & 0.945  \\
         & ~800 & 1.000  & 1.000  & 0.992  & 0.999  & 0.980  & 0.997  & 0.853  & 0.970  \\
         & 1200 & 1.000  & 1.000  & 0.999  & 1.000  & 0.981  & 0.998  & 0.916  & 0.987  \\
         & 1800 & 1.000  & 1.000  & 0.998  & 1.000  & 0.990  & 0.998  & 0.940  & 0.996  \\
\hline
\end{tabular*}
\bigskip \smallskip \smallskip
  \begin{tablenotes}
  \item NOTE: $\tilde{H}_0$, the null hypothesis; $\tilde{H}_1$, the alternative hypothesis; $n$, the sample size; $p$, the non-cure rate; $\mu$, the parameter of the censoring distribution under $\tilde{H}_0$; $\lambda$, the parameter of the censoring distribution under $\tilde{H}_1$.
  \end{tablenotes}
\end{table}

To compute $T_n$ and reach the asymptotic normality, it is essential to select a suitable $\epsilon$.
To do so, \cite{Biometrika_Mikael_2022} considered a quadratic errors criterion to select the optimal $\epsilon$, which leads to favorable results.
However, we found this criterion appears to be ineffective in our test.
We also attempted other methods to determine the optimal $\epsilon$, including a double-bootstrap procedure from \cite{CJS_CaoVan_2006}, and a bootstrap method from \cite{JRSSB_MikaelIngrid_2019}.
We finally found that when the value of $\epsilon$ is close to $t_{(n)}$, the empirical level and empirical power perform well.
We show the detailed simulation results for different $\epsilon$ in the Supplementary Material.
It is worthwhile noting that when $\epsilon \in (t_{(n)}-\tilde{t}_{(K)}, 2(t_{(n)}-\tilde{t}_{(K)})]$, $T_n$ always equals $0$.
Thus, we select $\epsilon$ based on two cases: whether $t_{(n)}$ is greater than $2(t_{(n)}-\tilde{t}_{(K)})$ or not.
When $2(t_{(n)}-\tilde{t}_{(K)}) < t_{(n)}$, we take $\epsilon$ as $\epsilon^{\ast}= \frac{9}{8}t_{(n)}-\frac{1}{4}\tilde{t}_{(K)}$ and otherwise fix it to $t_{(n)}$.
Besides, if the denominator of $\hat{p}_G(\epsilon^{\ast})$ equals $0$ or $\hat{p}_G(\epsilon^{\ast})<\hat{p}_n$, we redefine $\hat{p}_G(\epsilon^{\ast})=\hat{p}_n$ and hence set $T_n=0$. In addition, $\hat{p}_G(\epsilon^{\ast})$ is redefined as $1$ if $\hat{p}_G(\epsilon^{\ast})>1$.

We evaluate the performance of the test through its empirical level and power, where the bootstrap critical value is obtained from $500$ bootstrap iterations.
Tables~\ref{Tab1}-\ref{Tab3} summarize the results for three considered distributions.
In Table~\ref{Tab1} under $H_0$, for a large $\lambda$, the empirical level is quite close to the nominal level, even for the small sample size $(n=400)$ and moderate censoring rate $(p=0.7)$.
Besides, for a small $\lambda$, the empirical level is slightly higher than the nominal level when the sample size is small and the censoring rate is lower.
For example, when $\lambda=1.5$, $n=400$ and $p=0.9$, the empirical level is $0.106$ while the empirical level quickly reduces to the nominal level when either the sample size or $\lambda$ increase.
This phenomenon is reasonable since, the smaller value of $\lambda$ corresponds to the larger censoring rate.
For example, when $\lambda=3$, the censoring rate is $27\%$ at $p=0.9$.
As $\lambda$ decreases to $1.5$, the censoring rate reaches $39\%$.
Furthermore, under $H_1$ we observe that for each setting, the empirical power increases to 1 as the sample size gets larger. For small sample sizes, however, the empirical power sometimes falls short of expectations. For example, when $\mu=3.5$, $p=0.7$ and $n=400$, the empirical power is $0.134$, which is far from $1$.
A larger sample size leads to a substantial increase in the empirical power.
Besides, the empirical power also improves with decreasing $\mu$ since smaller $\mu$ means larger departures from the null hypothesis.

In Tables~\ref{Tab2}-\ref{Tab3}, we observe that the empirical level is similar to that in Table~\ref{Tab1}.
In addition, for a high non-cure rate, the empirical powers in Tables~\ref{Tab2}-\ref{Tab3} also quickly tend to 1 as the sample size increases like those in Table~\ref{Tab1}.
However, for a moderate non-cure rate, compared to the empirical powers shown in Table~\ref{Tab1}, the empirical powers in Tables~\ref{Tab2}-\ref{Tab3} approach 1 at a slightly slower rate as the sample size increases.
The reason for this is that the censoring rate in Tables~\ref{Tab2}-\ref{Tab3} is larger than that in Table~\ref{Tab1} for each respective setting.

As previously mentioned in Section~\ref{Intro}, there are several tests in the statistical literature related to this testing problem.
However, their testing procedures are not suitable for comparison with our proposed test as they all considered insufficient follow-up as the null hypothesis, but the null hypothesis in this article is sufficient follow-up.
Nevertheless, we still show the simulation results of \cite{Book_MallerZhou_1996}'s testing procedure in Table~\ref{Tab4} to evaluate the performance of their test when the distribution of the failure time for susceptibles is a Weibull distribution.
To avoid ambiguity, in Table~\ref{Tab4}, we denote $\tilde{H}_0$ and $\tilde{H}_1$ as the null hypothesis of insufficient follow-up and the alternative hypothesis of sufficient follow-up, respectively.
For convenience, we denote their test statistic as $Q_n$.
Here, the empirical level and empirical power of their test are obtained based on the asymptotic distribution of $Q_n$, established in \cite{CJS_MallerEtal_2023}, where the parameter $\gamma$ in the asymptotic distribution of $Q_n$ is set to $1$.
The results in Table~\ref{Tab4} indicate an unsatisfactory performance of their test since all empirical levels are significantly greater than the nominal level, even in the case of large sample sizes.
In particular, under $\tilde{H}_0$, longer follow-up periods result in larger empirical levels.

\section{Real data analysis}\label{Real_data}

\subsection{Leukemia data}

Leukemia is the most prevalent form of cancer among children and ranks as the seventh primary cause of cancer-related mortality in the United States.
At present, the survival rates of leukemia patients are steadily increasing.
\cite{CA-SiegelEtal-2023} showed the $5$-year survival rates of leukemia patients for three distinct time periods: $1975$-$1977$, $1995$-$1997$, and $2012$-$2018$, which were determined to be $34\%$, $48\%$, and $66\%$, respectively.
The significant improvement in leukemia patients $5$-year survival rates can be attributed to the advancements in medical standards.
For example, the chimeric antigen receptor (CAR) T cells technology \citep{Science_JuneEtal_2018}, the most advanced technology available to treat leukemia, has made rapid advancements and received commercial approvals recently.
In the future, we can expect that more patients will never experience a relapse or recurrence of leukemia after treatment.

Here, we will analyze the leukemia data from the Surveillance, Epidemiology and End Results (SEER) Program of the National Cancer Institute (NCI) to illustrate the proposed testing procedure.
The SEER database collects cancer case data from population-based cancer registries that recorded approximately $47.9\%$ of the United States population, and has been utilized by numerous researchers, statisticians, and policy makers in the United States.
Besides, an annual update keeps the SEER database current.
The complete SEER database can be obtained from the website https://seer.cancer.gov/.
The information the SEER registers collected includes patient demographics such as gender and age, year of diagnosis, primary tumor site, stage at diagnosis, follow-up time and vital status, among others.
The leukemia data we will analyze are extracted from the database `Incidence-SEER Research Data, 18 Registries, Nov 2020 Sub (2000-2018)'.

The follow-up time is the duration from the diagnosis of leukemia to the death caused by leukemia; times to death due to other causes than leukemia or the end of follow-up are considered as censored.
We only consider the patients with acute monocytic leukemia, and exclude patients with zero or unknown survival time and those with unknown vital status recode.
After excluding the patients mentioned above, the cohort of interest consists of $2624$ patients out of which $1416$ are male and $1208$ are female.
The censoring rate is $35.4\%$, and the uncensored survival time ranges from $0.08$ to $12.50$ years, with a median of $0.58$ years.

We first present the KM curve in Fig~\ref{KM_Leu}.
The KM curve levels off at about $0.283$ and has a long plateau.
Besides, a total of $210$ patients have censored survival times exceeding the largest uncensored survival time.
Hence, it appears that the presence of a cure fraction is plausible and the assumption of sufficient follow-up holds.
Next, we present an illustrative analysis of the leukemia data to demonstrate the proposed testing procedure.
As in the simulations, we set $\epsilon$ to $\epsilon^{\ast}= \frac{9}{8}t_{(n)}-\frac{1}{4}\tilde{t}_{(K)}$ when $2(t_{(n)}-\tilde{t}_{(K)}) < t_{(n)}$; otherwise, we set it to $t_{(n)}$.
For the leukemia data, the largest observed time $t_{(n)}$  and the largest uncensored observed time $\tilde{t}_{(K)}$ are $18.92$ and $12.50$, respectively, and thus $2(t_{(n)}-\tilde{t}_{(K)}) < t_{(n)}$ and $\epsilon=18.16$.
The nominal level $\alpha$ is $0.05$ and the bootstrap critical value is obtained from $1000$ bootstrap iterations.
Calculating $\hat{p}_n$ and $\hat{p}_G(\epsilon^{\ast})$ as $0.7171$ and $0.7173$, respectively, our test statistic equals $0.0002$, which is very close to 0.
We further calculate the $p$-value to be $0.1420$.
Therefore, we fail to reject the null hypothesis.
Finally, as a comparison we also implement the test of \cite{Book_MallerZhou_1996}.
It is valuable to highlight that the null hypothesis in \cite{Book_MallerZhou_1996} is insufficient follow-up.
All other settings remain consistent with those in the simulations.
We calculate the test statistic $Q_n$ as $0.0095$.
Besides, the resulting $p$-value is $0.0006$, which is very close to $0$.
Thus, their results reject the hypothesis of insufficient follow-up, which is consistent with our conclusions.

\begin{figure}
  \centering
  \includegraphics[width=400pt]{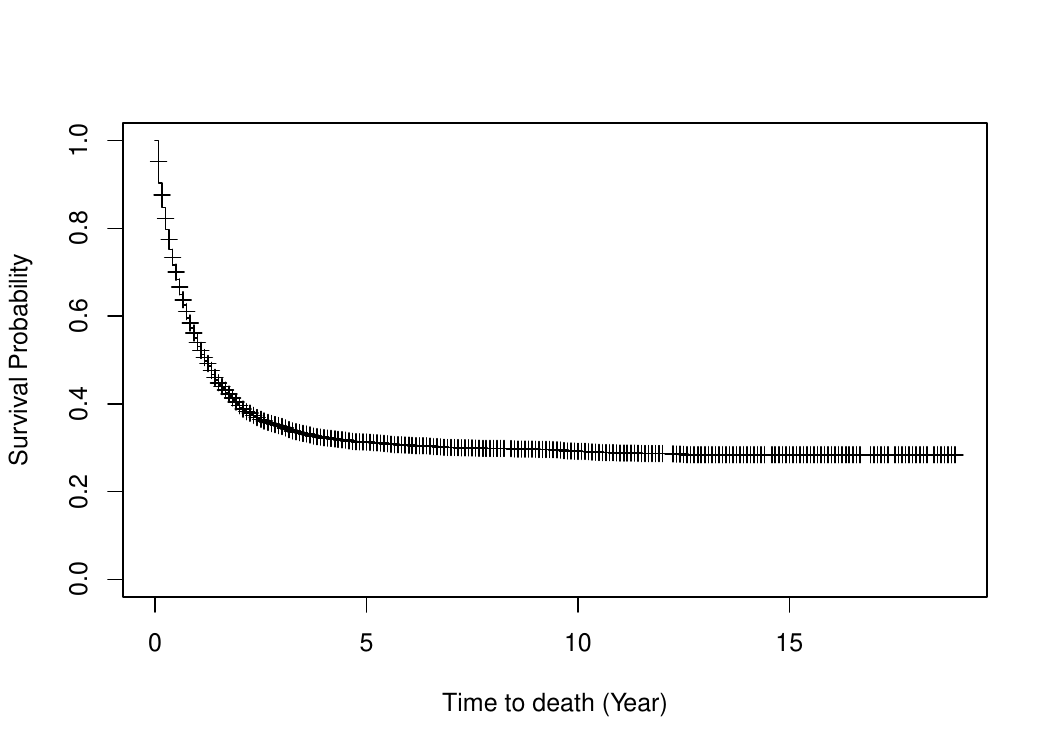}
  \caption{The Kaplan-Meier curve for the leukemia data}\label{KM_Leu}
\end{figure}

\subsection{Breast cancer data}

Female breast cancer, being one of the most prevalent types of cancer, has gained significant attention from statisticians, doctors and other researchers \citep{CE_HuntEtal_2014,JRSSC_ZhangEtal_2021}.
From 1989 to 2015, there was a notable reduction in breast cancer mortality by $39\%$ \citep{CA_DesantisEtal_2017}, thanks to significant advances in medical care.
For slowly proliferating tumours, a sufficient recording time is needed to estimate the survival rates.
\cite{BMC_TaiEtal_2005} analyzed the breast cancer data from the SEER database by assuming that the survival time of patients with breast cancer follows a log-normal distribution and determined the minimum required follow-up time for obtaining statistical cure rate estimates.
They showed that the minimum required follow-up time is 26.3 years for stage II of breast cancer whereas the actual records of the breast cancer data in the SEER database `Incidence-SEER Research Data, 18 Registries, Nov 2020 Sub (2000-2018)' are restricted, lasting only 19 years.
Here, we will apply the breast cancer data from the aforementioned SEER database to illustrate the proposed testing procedure.

We are interested in the time from the diagnosis of breast cancer to the death caused by breast cancer.
The censoring indicator is set to 1 if the patient dies from breast cancer, or $0$ if the patient is censored.
Here, we only consider the black females with breast cancer with tumour grade II and stage II.
Patients will be excluded if they have zero or unknown survival time.
Besides, we also exclude the patients with unknown cancer stage or unknown vital status recode.
After exclusions, there are $8618$ breast cancer patients in the cohort of interest.
The uncensored survival time ranges from $0.08$ to $18.25$ years, with a median of $4.33$ years, and the censoring rate is $81.3\%$.

The KM curve for the breast cancer data is presented in Fig~\ref{KM_breast}.
We observe that the KM curve has a short plateau.
An accurate assessment of the problem of testing for sufficient follow-up cannot be made through intuition alone.
Thus, we will apply the proposed testing procedure to check whether follow-up is sufficient for this breast cancer data.
We approximate the critical values in the test by employing the bootstrap procedure shown in Section \ref{Asym_theo} with $1000$ bootstrap iterations.
Here, the considered nominal level is $\alpha=0.05$.
As in the simulations, when $2(t_{(n)}-\tilde{t}_{(K)}) < t_{(n)}$, we take $\epsilon$ as $\epsilon^{\ast}= \frac{9}{8}t_{(n)}-\frac{1}{4}\tilde{t}_{(K)}$; otherwise, we fix it to $t_{(n)}$.
For the breast cancer data, $\epsilon$ is set to $16.72$ since $t_{(n)}=18.92$, $\tilde{t}_{(K)}=18.25$ and $2(t_{(n)}-\tilde{t}_{(K)}) < t_{(n)}$.
The estimators $\hat{p}_n$ and $\hat{p}_G(\epsilon^{\ast})$ are calculated as $0.3440$ and $0.4234$, which shows that the discrepancy between the naive estimator $\hat{p}_n$ and the estimator $\hat{p}_G(\epsilon^{\ast})$ is quite large.
The value of our test statistic is therefore $0.0794$.
We further calculate the $p$-value to be $0.0460$.
From our results, we reject the hypothesis of sufficient follow-up.
As a comparison, we further perform the test of \cite{Book_MallerZhou_1996}.
The remaining settings align with those used in the simulation.
The value of the test statistic $Q_n$ is $0.0002$ and $Q_n$ provides a $p$-value equal to $0.4219$.
Hence, the hypothesis of insufficient follow-up is not rejected from their results.
The results of both testing procedures suggest that the assumption of sufficient follow-up is not met.

\begin{figure}
  \centering
  \includegraphics[width=400pt]{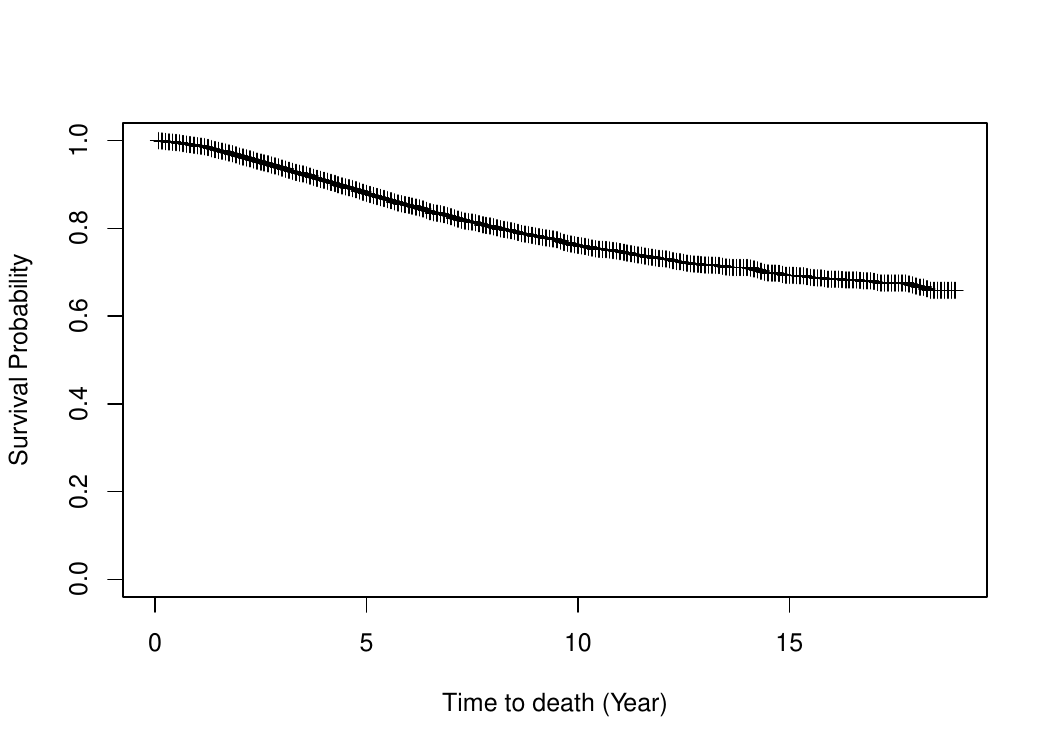}
  \caption{The Kaplan-Meier curve for the breast cancer data}\label{KM_breast}
\end{figure}

\section{Concluding remarks}\label{Diss}

In this article, we proposed a simple novel test to check the assumption of sufficient follow-up.
The proposed test statistic is based on the difference between two nonparametric estimators of the non-cure rate.
We established the asymptotic normality of our test statistic under the null hypothesis.
Extensive numerical studies have shown a satisfactory finite sample performance of our test.

We now point out some interesting future directions.
Our proposed testing procedure can be extended to the case when distributions are in the Frech\'{e}t domain of attraction or the Weibull domain of attraction.
The extension of our test is straightforward using our proposed approach by replacing the estimator $\hat{p}_{G}(\epsilon)$ in $T_n$ with the nonparametric estimator of the non-cure rate proposed by \cite{JRSSB_MikaelIngrid_2019}.
Besides, in our proposed testing procedure, we only considered the case without covariates.
In practice, however, covariates may be included in the model.
Hence, the objective of the second research is to extend the proposed testing procedure to the case involving covariates.
When covariates are incorporated into the model, \cite{CJS_XuPeng_2014} proposed a nonparametric estimator of the cure rate under sufficient follow-up, while \cite{CSDA_EscobaVanKeilegom_2023} further suggested another nonparametric estimator of the cure rate under insufficient follow-up.
Thus, the main idea is based on the difference between two nonparametric estimators of the cure rate proposed by \cite{CJS_XuPeng_2014} and \cite{CSDA_EscobaVanKeilegom_2023}, which gives us an indicator to check the assumption of sufficient follow-up in the presence of covariates.

\section*{Acknowledgment}

Ingrid Van Keilegom acknowledges support from the FWO and F.R.S.-FNRS under the Excellence of Science (EOS) programme, project ASTeRISK (grant No. 40007517).
The authors thank Xiaoguang Wang for helpful discussions and the Surveillance, Epidemiology, and End Results (SEER) Program of the National Cancer Institute for their commendable efforts in the creation of the SEER database.
The datasets that support the findings in this paper can be accessed from the website https://seer.cancer.gov/data/.
These datasets are used under license for this study and subject to restrictions on availability.

\section*{Supplementary material}

The R code and additional numerical simulations for different $\epsilon$ are available with this paper at Biometrika online.

\section*{Appendix}

\begin{proof}[Proof of Theorem~\ref{theorem2}]
To simplify the notations, we use $\hat{F}_{(d)}=\hat{F}_n(t_{(n)}-d)$ and $F_{(d)}=F(\tau_H-d)$, $d=0, \epsilon/2, \epsilon$.
Under sufficient follow-up and $0<p<1$, we have $\tau_H=\tau_{F_c}$.
First, we verify that
\begin{eqnarray}\label{minusepsilon}
\sqrt{n} (\hat{F}_{(\epsilon)}-F_{(\epsilon)})=X_n(t_{(n)}-\epsilon)+o_p(1),
\end{eqnarray}
where $X_n(t)$ is given in~\eqref{Xn_expression}.
To this end, we only need to prove $\sqrt{n} \{F(t_{(n)}-\epsilon)-F_{(\epsilon)}\}=o_p(1)$.
It follows by Taylor expansions that $\sqrt{n} \{F(t_{(n)}-\epsilon)-F_{(\epsilon)}\}=\sqrt{n}F'(\tau_H-\epsilon)(t_{(n)}-\tau_H)+o_p(\sqrt{n}|t_{(n)}-\tau_H|)$,
where $F'$ is the derivative of $F$.
For all $\delta>0$, we have
\begin{eqnarray*}
P(\sqrt{n}|t_{(n)}-\tau_H|>\delta)&=& P(t_{(n)}< \tau_H-\delta/\sqrt{n}) \\
&=& H^n(\tau_H-\delta/\sqrt{n})= \exp\{n \log H(\tau_H-\delta/\sqrt{n})\}.
\end{eqnarray*}
By Assumption~3 and $\tau_H=\tau_{F_c}$, we have
\begin{eqnarray*}
\lim_{n \rightarrow \infty}n \log H(\tau_H-\delta/\sqrt{n})&=&
\lim_{n \rightarrow \infty} n\{H(\tau_H-\delta/\sqrt{n})-1\}\\
&=& -\lim_{n \rightarrow \infty} n\{\bar{F}(\tau_H-\delta/\sqrt{n})\}\{\bar{F}_c(\tau_H-\delta/\sqrt{n})\}=-\infty.
\end{eqnarray*}
This shows that $P(\sqrt{n}|t_{(n)}-\tau_H|>\delta)\rightarrow 0$ in probability as $n \rightarrow \infty$.
Thus \eqref{minusepsilon} holds as claimed.
Following similar arguments, we can prove that
\begin{eqnarray*}
\sqrt{n} (\hat{F}_{(\epsilon/2)}-F_{(\epsilon/2)})=X_n(t_{(n)}-\epsilon/2)+o_p(1)
\end{eqnarray*}
and
\begin{eqnarray*}
\sqrt{n} (\hat{F}_{(0)}-F_{(0)})=X_n(t_{(n)})+o_p(1).
\end{eqnarray*}
The proof of Theorem~4.3 from \cite{Book_MallerZhou_1996} showed that the conditions of Theorem~\ref{theorem1} are satisfied under the conditions of Theorem~\ref{theorem2}.
Therefore, Theorem~1 holds.
Then, by Theorem~\ref{theorem1}, $X_n(t)$ converges weakly in $D[0,\tau_H]$ to $X(t)$ as $n \rightarrow \infty$.
In addition, $t_{(n)}$ converges in probability to $\tau_H$ as $n \rightarrow \infty$ \citep[Theorem 3.2]{Book_MallerZhou_1996}.
Based on Theorem~3.9 of \cite{Book_Billingsley_1999}, we have $(X_n(t), t_{(n)})$ converges weakly in $D[0,\tau_H]\times [0,\tau_H]$ to $(X(t),\tau_H)$ as $n \rightarrow \infty$.
Using Lemma on page~$151$ of \cite{Book_Billingsley_1999}, we prove that $X_n(t_{(n)})$ converges in distribution to $X(\tau_H)$ as $n\rightarrow \infty$.
Following similar arguments, we can prove that $X_n(t_{(n)}-\epsilon)$ and $X_n(t_{(n)}-\epsilon/2)$ converge in distribution to $X(\tau_H-\epsilon)$ and $X(\tau_H-\epsilon/2)$, respectively, as $n\rightarrow \infty$.

Next, we decompose $\sqrt{n} (T_n-b_{\epsilon})$ as follows:
\begin{eqnarray}\label{decompose}
\sqrt{n}(T_n -b_{\epsilon}) &=& \sqrt{n} \left(\frac{\hat{F}^2_{(\epsilon/2)}-\hat{F}_{(0)}\hat{F}_{(\epsilon)}}{2\hat{F}_{(\epsilon/2)}-\hat{F}_{(0)}-\hat{F}_{(\epsilon)}}
-\frac{F^2_{(\epsilon/2)}-F_{(0)}F_{(\epsilon)}}{2F_{(\epsilon/2)}-F_{(0)}-F_{(\epsilon)}}\right) 
 -\sqrt{n}(\hat{F}_{(0)}-F_{(0)}) \notag \\
&=& \sqrt{n}\left(\frac{1}{2\hat{F}_{(\epsilon/2)}-\hat{F}_{(0)}-\hat{F}_{(\epsilon)}}
-\frac{1}{2F_{(\epsilon/2)}-F_{(0)}-F_{(\epsilon)}}\right)
 (\hat{F}^2_{(\epsilon/2)}-\hat{F}_{(0)}\hat{F}_{(\epsilon)}) \nonumber \\
&&+\sqrt{n}\frac{\hat{F}^2_{(\epsilon/2)}-\hat{F}_{(0)}\hat{F}_{(\epsilon)}+F_{(0)}F_{(\epsilon)}-F^2_{(\epsilon/2)}}{2F_{(\epsilon/2)}-F_{(0)}-F_{(\epsilon)}} -\sqrt{n}(\hat{F}_{(0)}-F_{(0)}).
\end{eqnarray}
By the continuous mapping theorem and Slutsky's lemma, the first term in~\eqref{decompose} can be shown to be
\begin{eqnarray*}
\{X_n(t_{(n)}-\epsilon) -2X_n(t_{(n)}-\epsilon/2)+X_n(t_{(n)})\}\frac{F^2_{(\epsilon/2)}-F_{(0)}F_{(\epsilon)}}{(2F_{(\epsilon/2)}-F_{(\epsilon)}-F_{(0)})^2}+o_p(1).
\end{eqnarray*}
Likewise, the second term in~\eqref{decompose} equals
\begin{eqnarray*}
\{2F_{(\epsilon/2)}X_n(t_{(n)}-\epsilon/2)-F_{(0)}X_n(t_{(n)}-\epsilon)-F_{(\epsilon)}X_n(t_{(n)})\}
\frac{1}{2F_{(\epsilon/2)}-F_{(\epsilon)}-F_{(0)}}+o_p(1).
\end{eqnarray*}
It follows from the above analysis that
\begin{eqnarray*}
\sqrt{n}(T_n -b_{\epsilon})=s_1X_n(t_{(n)}-\epsilon)+s_2X_n(t_{(n)}-\epsilon/2)+s_3X_n(t_{(n)})+o_p(1),
\end{eqnarray*}
where
\begin{eqnarray}\label{s1}
s_1=\frac{(F_{(\epsilon/2)}-F_{(0)})^2}{(2F_{(\epsilon/2)}-F_{(\epsilon)}-F_{(0)})^2},
\end{eqnarray}
\begin{eqnarray}\label{s2}
s_2=\frac{2(F_{(0)}-F_{(\epsilon/2)})(F_{(\epsilon)}-F_{(\epsilon/2)})}{(2F_{(\epsilon/2)}-F_{(\epsilon)}-F_{(0)})^2}
\end{eqnarray}
and
\begin{eqnarray}\label{s3}
s_3=\frac{(F_{(\epsilon/2)}-F_{(\epsilon)})^2}{(2F_{(\epsilon/2)}-F_{(\epsilon)}-F_{(0)})^2}-1.
\end{eqnarray}

Finally, we have that $\sqrt{n}(T_n -b_{\epsilon})$ converges in distribution to a normal distribution with zero mean and variance
\begin{eqnarray*}
\sigma^2_{\epsilon}=\sum_{i=1}^3\sum_{j=1}^3 s_i s_j \bar{F}\{\tau_H+\epsilon(i-3)/2\}
\bar{F}\{\tau_H+\epsilon(j-3)/2\} v\{\tau_H+\epsilon(i\wedge j-3)/2\}.
\end{eqnarray*}
This finishes the proof.
\end{proof}

\bibliography{mybibfile}
\end{document}